\theoremstyle{plain}
\newtheorem{theorem}{Theorem}[section]
\newtheorem{proposition}{Proposition}[section]
\theoremstyle{definition}
\newtheorem{definition}{Definition}[section]
\numberwithin{equation}{section}
\newcommand{\leqmode}{\tagsleft@true}
\newcommand{\reqmode}{\tasleft@false}
\begin{document}

\title[K-Groups]{K-groups: A Generalization of K-means Clustering}

\author{Songzi Li}
\address{Biostatistics and Programming Department \\
         Pharmaceutical Product Deparment, LLC\\
         Wilmington, NC 28401}
\email{lisongzi24601@gmail.com}

\author{Maria L. Rizzo}
\address{Dept. of Mathematics \& Statistics\\
    Bowling Green State University\\
    Bowling Green, OH 43403
}

\date{July 25, 2017}

\email{mrizzo@bgsu.edu}

\begin{abstract}
  We propose a new class of distribution-based clustering algorithms, called k-groups, based on energy distance between samples. The energy distance clustering criterion assigns observations to clusters according to a multi-sample energy statistic that measures the distance between distributions. The energy distance determines a consistent test for equality of distributions, and it is based on a population distance that characterizes equality of distributions.  The k-groups procedure therefore generalizes the k-means method, which separates clusters that have different means. We propose two k-groups algorithms: k-groups by first variation; and k-groups by second variation. The implementation of k-groups is partly based on Hartigan and Wong's algorithm for k-means. The algorithm is generalized from moving one point on each iteration (first variation) to moving $m$ $(m > 1)$ points. For univariate data, we prove that Hartigan and Wong's k-means algorithm is a special case of k-groups by first variation. The simulation results from univariate and multivariate cases show that our k-groups algorithms perform as well as Hartigan and Wong's k-means algorithm when clusters are well-separated and normally distributed. Moreover, both k-groups algorithms perform better than k-means when data does not have a finite first moment
  or data has strong skewness and heavy tails. For non--spherical clusters, both k-groups algorithms performed better than k-means in high dimension, and k-groups by first variation is consistent as dimension increases. In a case study on dermatology data with 34 features, both k-groups algorithms performed better than k-means.
\end{abstract}

\keywords{K-means, K-groups, cluster analysis, energy distance}

\maketitle

\section{Introduction}
Cluster analysis is one of the core topics of data mining and has many application domains such as astronomy, psychology, market research and bioinformatics. Clustering is a fundamental tool in unsupervised methods of data mining, where it is used to group similar objects together without using external information such as class labels. In general, there are two purposes for using cluster analysis: understanding and utility \cite{milligan1996clustering}. Understanding in cluster analysis means to find groups of objects that share common characteristics. Utility of cluster analysis aims to abstract the representative objects from objects in the same groups. The earliest research on cluster analysis can be traced back to 1894, when Karl Pearson used the moment matching method to determine the mixture parameters of two single-variable components \cite{pearson1894contributions}. There are various clustering algorithms, each algorithm with its own advantages in specific situations. In this paper we propose new clustering methods   \emph{$k$-groups} which generalize and extend the well known and widely applied $k$-means cluster analysis method. Our cluster distance is based on a characterization of equality between distributions, and it applies in arbitrary dimension. It generalizes $k$-means, which
separates clusters by differences in means. The $k$-groups cluster distance \cite{ls2015}, which is based on \emph{energy distance} \cite{szekely2004testing,rizzo2010disco,sr2013b}, separates clusters by differences in distributions.

\subsection{K-means}

$K$-means is a prototype-based algorithm which uses the cluster mean as the centroid, and assigns observations to the cluster with the nearest centroid.
Let $D = \{x_{1}, \ldots, x_{n}\} \subset R^{m} $ the data set to be clustered, and $P = \{ \pi_{1}, \ldots, \pi_{K} \}$, a partition of $D$, where $K$ is the number of clusters set by the user.  Thus we have $\cup_{i} \pi_{i}=D ,$ and $\pi_{i} \cap \pi_{j}= \emptyset$ if $i\neq j$. The symbol $\omega_{x}$ denotes the weight of $x$,  $n_{k}$ is the number of data objects assigned to cluster $\pi_{k}$, and  $c_{k}=\sum_{x\in \pi_{k}} \frac{\omega_{x} x}{n_{k}}$ represents the centroid of cluster $\pi_{k}$, $1 \le k \le K$. The function $d(x,y)$ is a symmetric, zero-diagonal dissimilarity function that measures the distance between data objects $x$ and $y$. The $k$-means clustering objective is
\begin{equation}
    min_{c_{k}, 1\le k\le K}\sum_{k=1}^K\sum_{x\in \pi_{k}} \omega_{x} d(x,c_{k}).
\end{equation}
Implementing a $k$-means algorithm is equivalent to a global minimum problem which is computationally difficult (NP-hard). An early  algorithm proposed by Stuart Lloyd in 1957 \cite{lloyd1982least} was commonly applied. A more efficient version was proposed and published in Fortran by Hartigan and Wong in 1979 \cite{hartigan1979algorithm}. The distance or dissimilarity function $d(x,y)$ is one of the important factors that influences the performance of $k$-means. The most commonly used distance functions are Euclidean quadratic distance, spherical distance, and Kullback-Leibler Divergence \cite{tan2006cluster}. Each choice determines a cluster distance function. In this paper, we propose a new cluster distance function: \emph{Energy Distance}.

\subsection{Energy Distance}

Sz{\'e}kely proposed \emph{energy distance}  \citep{TR2002,TR89},  as  a statistical distance between samples of observations.
For an overview of methods based on energy distance see e.g.\ \cite{sr2013b,sr2017}. The concept is based on the notion of Newton's gravitational potential energy, which is a function of the distance between two bodies in a gravitational space.
\begin{definition}
\emph{Energy Distance}. The energy distance between the $d$-dimensional independent random variables $X$ and $Y$ is defined as
\begin{equation*}
\mathcal{E}(X,Y)=2E|X-Y|_{d}-E|X-X'|_d-E|Y-Y'|_d,
\end{equation*}
where $E|X|_d<\infty$, $E|Y|_d<\infty$, $X'$ is an independent and identically distributed (iid) copy of $X$, and $Y'$ is an iid copy of Y. \label{energy distance}
\end{definition}
Here and throughout, $| \cdot |_d$ denotes Euclidean distance in $R^d$, and we omit $d$ when the dimension is clear in context. A primed random variable denotes an iid copy; that is, $X$ and $X'$ are iid.

Let $F(x)\,\mathrm{and}\,G(x)$ be the cumulative distribution functions, and $\hat{f}(t)$ and $\hat{g}(t)$ be the characteristic functions of independent random variables $X$ and $Y$, respectively. The following definition is introduced in Sz{\'e}kely (2002), and in Sz{\'e}kely and Rizzo (2005).

\begin{definition}\label{energy distance alpha}
Let $X$ and $Y$ be independent $d$-dimensional random variables with characteristic functions $\hat{f},\hat{g}$, respectively, and $E|X|^{\alpha} < \infty$, $E|Y|^{\alpha} < \infty$ for some $0<\alpha<2$. The energy distance between $X$ and $Y$ is defined as
\begin{align}
\label{energy th1} \mathcal{E}^{\alpha}(X,Y)=& 2E|X-Y|_d^{\alpha}-E|X-X'|_d^{\alpha}-E|Y-Y'|_d^{\alpha} \\ \notag
             =& \frac{1}{C(d,\alpha)}\int_{R^d} \frac{|\hat{f}(t)-\hat{g}(t)|^2}{|t|_d^{d+\alpha}} dt,
\end{align}
where $0<\alpha<2$, $|\cdot|$ is the complex norm, and
\begin{equation*}
C(d,\alpha)=2\pi^{\frac{d}{2}}\frac{\Gamma\left(1-\frac{\alpha}{2}\right)}{\alpha 2^{\alpha}\Gamma\left(\frac{d+\alpha}{2}\right)}
\end{equation*}
\end{definition}

The following theorem (\cite[Theorem 2]{TR2002}, \cite[Theorem 2]{szekely2005new}]) establishes that energy distance between
random variables characterizes equality in distribution.
\begin{theorem} \label{energy distance alpha p}
For all $0<\alpha<2$,
$$
\mathcal{E}^{\alpha}(X,Y)= 2E|X-Y|_d^{\alpha}-E|X-X'|_d^{\alpha}-E|Y-Y'|_d^{\alpha} \geq 0,
$$
with equality to zero if and only if $X$ and $Y$ are identically distributed.
\end{theorem}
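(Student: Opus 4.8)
The plan is to derive the Fourier representation already displayed in Definition~\ref{energy distance alpha}, namely
\[
\mathcal{E}^{\alpha}(X,Y)=\frac{1}{C(d,\alpha)}\int_{R^d}\frac{|\hat f(t)-\hat g(t)|^2}{|t|_d^{d+\alpha}}\,dt ,
\]
since once this identity is in hand both claims are immediate. The analytic heart of the matter is the classical lemma that for every $x\in R^d$ and every $0<\alpha<2$,
\[
\int_{R^d}\frac{1-\cos\langle t,x\rangle}{|t|_d^{d+\alpha}}\,dt=C(d,\alpha)\,|x|_d^{\alpha},
\]
with $C(d,\alpha)$ the finite positive constant from Definition~\ref{energy distance alpha}. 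I would prove this by first checking that the integral converges exactly when $0<\alpha<2$: near the origin $1-\cos\langle t,x\rangle=O(|t|^2)$, so the integrand is $O(|t|^{2-d-\alpha})$, integrable there iff $\alpha<2$, while at infinity the integrand is $O(|t|^{-d-\alpha})$, integrable iff $\alpha>0$; then, by rotational invariance and the substitution $t\mapsto t/|x|$, the integral equals $|x|^{\alpha}$ times a dimensional constant, and a direct computation (one-dimensional, after passing to polar coordinates) identifies that constant as $C(d,\alpha)$.

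With the lemma available, I would insert it into each of the three expectations in $\mathcal{E}^{\alpha}(X,Y)$. The hypotheses $E|X|^{\alpha}<\infty$ and $E|Y|^{\alpha}<\infty$ make $E|X-Y|_d^{\alpha}$, $E|X-X'|_d^{\alpha}$ and $E|Y-Y'|_d^{\alpha}$ all finite, so $\mathcal{E}^{\alpha}(X,Y)$ is well defined and — since each integrand $\bigl(1-\cos\langle t,\cdot\rangle\bigr)/|t|_d^{d+\alpha}$ is nonnegative — Tonelli's theorem lets me swap $E$ and $\int_{R^d}$ term by term. Using independence, $E[\cos\langle t,X-Y\rangle]=\operatorname{Re}\bigl(\hat f(t)\overline{\hat g(t)}\bigr)$, and the two ``self'' terms produce numerators $1-|\hat f(t)|^2$ and $1-|\hat g(t)|^2$; recombining the three finite integrals and using
\[
2\bigl(1-\operatorname{Re}(\hat f\overline{\hat g})\bigr)-\bigl(1-|\hat f|^2\bigr)-\bigl(1-|\hat g|^2\bigr)=|\hat f|^2-2\operatorname{Re}(\hat f\overline{\hat g})+|\hat g|^2=|\hat f(t)-\hat g(t)|^2
\]
yields the displayed representation.

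From the representation the theorem follows quickly. The integrand is pointwise nonnegative and $C(d,\alpha)>0$, hence $\mathcal{E}^{\alpha}(X,Y)\ge 0$. If $\mathcal{E}^{\alpha}(X,Y)=0$, then $|\hat f(t)-\hat g(t)|^2/|t|_d^{d+\alpha}=0$ for Lebesgue-almost every $t$, so $\hat f=\hat g$ a.e.; characteristic functions are continuous, so in fact $\hat f\equiv\hat g$ on $R^d$, and the uniqueness theorem for characteristic functions gives $X\stackrel{d}{=}Y$. Conversely, if $X$ and $Y$ are identically distributed, then $\hat f\equiv\hat g$ and the integral — hence $\mathcal{E}^{\alpha}(X,Y)$ — vanishes. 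I expect the main obstacle to be the rigorous treatment of the cosine-integral lemma (the singularity at $t=0$, which is precisely where the restriction $\alpha<2$ is forced, and the evaluation of $C(d,\alpha)$) together with the Fubini/Tonelli justification; after that the argument is only the short algebraic identity above and a continuity remark. It is worth noting that $\alpha<2$ is indispensable for the ``only if'' direction: at $\alpha=2$ one merely obtains $\mathcal{E}^{2}(X,Y)=2\,|EX-EY|_d^{2}$, which vanishes whenever $X$ and $Y$ have equal means, regardless of their distributions.
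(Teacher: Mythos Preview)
The paper does not supply its own proof of this theorem; it is stated with a citation to Sz\'ekely's technical report and Sz\'ekely--Rizzo (2005), and the integral representation you exploit is already recorded as part of Definition~\ref{energy distance alpha}. Your argument is correct and is exactly the standard proof found in those references: establish the cosine-integral identity $\int_{R^d}(1-\cos\langle t,x\rangle)\,|t|^{-d-\alpha}\,dt=C(d,\alpha)|x|^{\alpha}$ for $0<\alpha<2$, apply it termwise (Tonelli) to obtain the Fourier form, and then read off nonnegativity and invoke continuity plus the uniqueness theorem for characteristic functions. There is nothing to compare against in the present paper beyond noting that your derivation reproduces the identity~(\ref{energy th1}) that the authors quote without proof.
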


Note that when $\alpha=2$, we have
\begin{equation*}
2E|X-Y|^{2}-E|X-X'|^{2}-E|Y-Y'|^{2} = 2|E(X) - E(Y)|^2,
\end{equation*}
which measures the squared distance between means.
Hence, the above characterization does not hold for $\alpha =2$ since we have equality to zero whenever $E(X)=E(Y)$ in (\ref{energy distance alpha}).

The two-sample energy statistic corresponding to energy distance $\mathcal{E}^{\alpha}(X,Y)$, for independent random samples $\mathbf{X}=X_1,..., X_{n_1}$ and $\mathbf{Y}=Y_1,..., Y_{n_2}$ is
\begin{eqnarray}
\nonumber \mathcal{E}_{n_1,n_2}^{\alpha}(\mathbf{X},\mathbf{Y})&=&\frac{2}{n_1n_2}\sum_{i=1}^{n_1}\sum_{m=1}^{n_2}|X_i-Y_m|^{\alpha}-\\
\nonumber&&\frac{1}{n_1^2}\sum_{i=1}^{n_1}\sum_{j=1}^{n_1}|X_i-X_j|^{\alpha}-\frac{1}{n_2^2}\sum_{l=1}^{n_2}\sum_{m=1}^{n_2}|Y_l-Y_m|^{\alpha},
\end{eqnarray}
where $\alpha \in (0,2)$. The weighted two-sample statistic
\begin{displaymath}
T_{X,Y}=\left(\frac{n_1n_2}{n_1+n_2}\right)\mathcal{E}_{n_1,n_2}(\mathbf{X},\mathbf{Y})
\end{displaymath}
determines a consistent test \cite{szekely2004testing} for equality of distributions of $X$ and $Y$. The multi-sample energy test for equality of $k$ distributions, $k \geq 2$, is given in \cite{rizzo2010disco}.

\section{K-groups}

$K$-means usually applies quadratic distance to compute the dissimilarity between the data object and the prespecified prototype, and minimizes the variance within the clusters. In this paper, we apply a weighted two-sample energy statistic $T_{X,Y}$ as the statistical function to measure the dissimilarity between the clusters, and modify the $k$-means algorithm given by Hartigan and Wong in 1979 \cite{hartigan1979algorithm} to implement our algorithm. Generally, our method belongs to the class of distribution-based algorithms. This kind of algorithm takes a cluster as a dense region of data objects that is surrounded by regions of low densities. They are often employed when the clusters are irregular or intertwined, or when noise and outliers are present. Since the energy distance measures the similarity between two sets rather than the similarity between the object and prototype, we name our method \emph{$k$-groups}.

We define dispersion between two sets $A,B$ as
\begin{displaymath}
G^{\alpha}(A,B)=\frac{1}{n_1n_2}\sum_{i=1}^{n_1}\sum_{m=1}^{n_2}|a_{i}-b_{m}|^{\alpha},
\end{displaymath}
where $0 <\alpha \le 2$, and $n_1,n_2$ are the sample sizes for sets $A,B$. Let $P= \{\pi_1,...,\pi_k\}$ be a partition of observations, where $k$ is the number of clusters, prespecified. We define the total dispersion of the observed response by
\begin{displaymath}
 T^{\alpha}(\pi_1,...\pi_k)=\frac{N}{2}G^{\alpha}(\cup_{i=1}^k\pi_i,\cup_{i=1}^k\pi_i),
\end{displaymath}
where N is the total number of observations. The within-groups dispersion is defined by
\begin{displaymath}
W^{\alpha}(\pi_1,...\pi_k)=\sum_{j=1}^{k}\frac{n_j}{2}G^{\alpha}(\pi_j,\pi_j),
\end{displaymath}
where $n_j$ is the sample size for cluster $\pi_j$. The between-sample dispersion is
\begin{displaymath}
B^{\alpha}(\pi_1,...\pi_k)=
\sum_{1 \leq i< j \leq k}\left\{\frac{n_i n_j}{2N}
\left(2G^{\alpha}(\pi_i,\pi_j)-G^{\alpha}(\pi_i,\pi_i)-G^{\alpha}(\pi_j,\pi_j)\right)\right\}.
\end{displaymath}
When $0 < \alpha \leq 2$ we have the decomposition
\begin{displaymath}
T^{\alpha}(\pi_1,...\pi_k)=W^{\alpha}(\pi_1,...\pi_k)+B^{\alpha}(\pi_1,...\pi_k),
\end{displaymath}
where both $W^{\alpha}(\pi_1,...\pi_k)$ and $B^{\alpha}(\pi_1,...\pi_k)$ are nonnegative (Rizzo and Sz{\'e}kely 2010) applied this decomposition of $T^\alpha$ into distance components (disco) to obtain a consistent nonparametric test for equality of $k$ distributions.

To maximize between-sample dispersion, $B^{\alpha}(\pi_1,...\pi_k)$, with $T^{\alpha}(\pi_1,...\pi_k)$ constant, is equivalent to minimizing $W^{\alpha}(\pi_1,...\pi_k)$. Hence, our purpose is to find the best partitions which minimize the within-cluster dispersion $W^{\alpha}$. Therefore, the objective function for $k$-groups is
\begin{equation} \label{diff of w}
     min_{\pi_1,...,\pi_k}\sum_{j=1}^{k}\frac{n_j}{2}G^{\alpha}(\pi_j,\pi_j)=min_{\pi_1,...,\pi_k}W^{\alpha}(\pi_1,...\pi_k).
\end{equation}

\subsection{First Variation Algorithm}
Motivated by Hartigan and Wong's idea, we searched for a $k$-partition with locally optimal $W^{\alpha}$ by moving points from one cluster to another. We call this reallocation step \emph{First Variation}.
\begin{definition}
A first variation of a partition $P$ is a partition $P'$ obtained from $P$ by removing a single point $\textbf{a}$ from a cluster $\pi_{i}$ of $P$ and assigning this point to an existing cluster $\pi_j$ of $P$.
\end{definition}

Let $\pi_1$ and $\pi_2$ be two different clusters in partition $P= {\pi_1,...,\pi_k}$, and point $\textbf{a} \in \pi_i$. Cluster $\pi^{-}_1$ represents cluster $\pi_1$ after removing point $\textbf{a}$, and cluster $\pi^{+}_2$ represents cluster $\pi_2$ after adding point $\bf{a}$. Let $n_1$ and $n_2$ be the sizes of cluster $\pi_1$ and $\pi_2$. The within-cluster dispersion of $\pi_1$ and $\pi_2$ are
\begin{align}
\notag &G^{\alpha}(\pi_{1},\pi_{1})=\frac{1}{2n_{1}} \sum_{i}^{n_{1}} \sum_{j}^{n_{1}} |x_{i}^{1}-x_{j}^{1}|^{\alpha}, \\
\notag &G^{\alpha}(\pi_{2},\pi_{2})=\frac{1}{2n_{2}} \sum_{i}^{n_{2}} \sum_{j}^{n_{2}} |x_{i}^{2}-x_{j}^{2}|^{\alpha},
\end{align}
where $x_{i}^{1} \in \pi_1, i=1,...,n_1$ and $x_{i}^{2}\in\pi_2, i=1,...,n_2$. The within-cluster dispersion of $\pi^{-}_1$ and $\pi^{+}_2$ are
\begin{align}
\notag &G^{\alpha}(\pi^{-}_{1},\pi^{-}_{1})=\frac{1}{2\cdot(n_{1}-1)} \sum_{i}^{n_{1}-1} \sum_{j}^{n_{1}-1} |x_{i}^{-1}-x_{j}^{-1}|^{\alpha}, \\
\notag &G^{\alpha}(\pi^{+}_{2},\pi^{+}_{2})=\frac{1}{2\cdot(n_{2}+1)} \sum_{i}^{n_{2}+1} \sum_{j}^{n_{2}+1} |x_{i}^{+2}-x_{j}^{+2}|^{\alpha},
\end{align}
where $x_{i}^{-1} \in \pi^{-}_{1}, i=1,...,n_1-1$ and $x_{i}^{+2}\in\pi^{+}_2, i=1,...,n_2+1$. The two-sample energy statistics between point $\textbf{a}$ with cluster $\pi_1$ and $\pi_2$ are
\begin{equation} \label{epi1}
\xi^{\alpha}( \textbf{a},\pi_{1})=\frac{2}{n_{1}}\sum_{i}^{n_{1}}|x_{i}^{1}-\textbf{a}|^{\alpha}-
\frac{1}{n_1^2} \sum_{i}^{n_{1}} \sum_{j}^{n_1} |x_{i}^{1}-x_{j}^{1}|^{\alpha},
\end{equation}

\begin{equation} \label{epi2}
\xi^{\alpha}(\textbf{a},\pi_{2})=\frac{2}{n_{2}}\sum_{i}^{n_2}|x_{i}^{2}-\textbf{a}|^{\alpha}-
\frac{1}{n_{2}^{2}} \sum_{i}^{n_{2}} \sum_{j}^{n_2} |x_{i}^{2}-x_{j}^{2}|^{\alpha}.
\end{equation}

The resulting change $W^{\alpha}(P)-W^{\alpha}(P')$ in the within component is derived as follows.

\begin{enumerate}[(i)]
\item
 First, we compute $G^{\alpha}(\pi_{1},\pi_{1})-G^{\alpha}(\pi_{1}^{-},\pi_{1}^{-})$, as
\begin{align}   
  \notag G^{\alpha}&(\pi_{1},\pi_{1})-G^{\alpha}(\pi_{1}^{-},\pi_{1}^{-})\\
  \notag          &= \frac{1}{2\cdot n_{1}}\sum_{i}^{n_{1}}\sum_{j}^{n_{1}} |x_{i}^{1}-x_{j}^{1}|^{\alpha}-\frac{1}{2\cdot(n_{1}-1)}\sum_{i}^{n_{1}-1}\sum_{j}^{n_{1}-k} |x_{i}^{-1}-x_{j}^{-1}|^{\alpha}  \\
  \notag  &=\frac{1}{2\cdot n_{1}}\sum_{i}^{n_{1}}\sum_{j}^{n_{1}}|x_{i}^{1}-x_{j}^{1}|^{\alpha}-\frac{1}{2\cdot(n_{1}-1)}
                 \{ \sum_{i}^{n_{1}}\sum_{j}^{n_{1}}|x_{i}^{1}-x_{j}^{1}|^{\alpha}\\
  \notag  &-2\sum_{i}^{n_{1}}|x_{i}^{1}-\textbf{a}|^{\alpha}
                 \}\\
  \label{diff Gpi1} =&\frac{1}{n_{1}-1}\sum_{i}^{n_{1}}|x_{i}^{1}-\textbf{a}|^{\alpha}-\frac{1}{2\cdot n_{1}(n_{1}-1)}\sum_{i}^{n_{1}}\sum_{j}^{n_{1}}|x_{i}^{1}-x_{j}^{1}|^{\alpha}.
 \end{align}
\item Multiply $\frac{n_{1}}{2(n_{1}-1)}$ times equation (\ref{epi1}) to obtain
 \begin{align}
\notag \frac{n_{1}}{2(n_{1}-1)}\xi^{\alpha}(\textbf{a},\pi_{1})&=\frac{1}{n_{1}-1}\sum_{i}^{n_{1}}|x_{i}^{1}-a|^{\alpha}\\
 \label{weighted epi1}  &-\frac{1}{2\cdot n_{1}(n_{1}-1)}\sum_{i}^{n_{1}}\sum_{j}^{n_{1}}|x_{i}^{1}-x_{j}^{1}|^{\alpha}.
 \end{align}
\item Subtract (\ref{weighted epi1}) from (\ref{diff Gpi1}):
\begin{equation} \label{Gpi1 weighted epil}
G^{\alpha}(\pi_{1},\pi_{1})-G^{\alpha}(\pi_{1}^{-},\pi_{1}^{-})=\frac{n_{1}}{2(n_{1}-1)}\xi^{\alpha}(\textbf{a},\pi_{1}).
\end{equation}
\item Then compute $G^{\alpha}(\pi_{2}^{+},\pi_{2}^{+})-G^{\alpha}(\pi_{2},\pi_{2})$ as
\begin{align}   
  \notag &G^{\alpha}(\pi_{2}^{+},\pi_{2}^{+})-G^{\alpha}(\pi_{2},\pi_{2}) \\
  \notag &= \frac{1}{2\cdot (n_{2}+1)}\sum_{i}^{n_{2}+1}\sum_{j}^{n_{2}+1} |x_{i}^{+2}-x_{j}^{+2}|^{\alpha}-\frac{1}{2 \cdot
                 n_{2}}\sum_{i}^{n_{2}}\sum_{j}^{n_{2}} |x_{i}^{2}-x_{j}^{2}|^{\alpha}  \\
  \notag  &= \frac{1}{2\cdot (n_{2}+1)} \{\sum_{i}^{n_{2}}\sum_{j}^{n_{2}}|x_{i}^{2}-x_{j}^{2}|^{\alpha}+2\sum_{i}^{n_{2}}|x_{i}^{2}-\bf{a}|^{\alpha}\}\\
  \notag  &-\frac{1}{2 \cdot n_{2}}\sum_{i}^{n_{2}}\sum_{j}^{n_{2}} |x_{i}^{2}-x_{j}^{2}|^{\alpha} \\
 \label{diff Gpi2} &= \frac{1}{n_{2}+1}\sum_{i}^{n_{2}}|x_{i}^{2}-a|^{\alpha}-\frac{1}{2\cdot n_{2}(n_{1}+1)}\sum_{i}^{n_{2}}\sum_{j}^{n_{2}}|x_{i}^{2}-x_{j}^{2}|^{\alpha}.
 \end{align}
\item Multiply $\frac{n_{2}}{2(n_{1}+1)}$ times equation (\ref{epi2}):
\begin{align}
\notag \frac{n_{2}}{2(n_{1}+1)}\xi^{\alpha}(\textbf{a},\pi_{2})&=\frac{1}{n_{2}+1}\sum_{i}^{n_{2}}|x_{i}^{2}-\textbf{a}|^{\alpha}\\
 \label{weighted epi2}    &-\frac{1}{2\cdot n_{2}(n_{2}+1)}\sum_{i}^{n_{2}}\sum_{j}^{n_{2}}|x_{i}^{2}-x_{j}^{2}|^{\alpha}.
 \end{align}
\item Subtract (\ref{weighted epi2}) from (\ref{diff Gpi2}) to obtain
\begin{equation}\label{Gpi2 weighted epi2}
G^{\alpha}(\pi_{2}^{+},\pi_{2}^{+})-G^{\alpha}(\pi_{2},\pi_{2})=\frac{n_{2}}{2(n_{2}+1)}\xi^{\alpha}(\textbf{a},\pi_{2}).
\end{equation}
\item Finally, subtract (\ref{Gpi2 weighted epi2}) from (\ref{Gpi1 weighted epil}), which gives
\begin{align}
\notag G^{\alpha}&(\pi_{1},\pi_{1})+G^{\alpha}(\pi_{2},\pi_{2})- G^{\alpha}(\pi^{-}_{1},\pi^{-}_{1})-G^{\alpha}(\pi^{+}_{2},\pi^{+}_{2})\\
\label{first variation}       &=\frac{n_{1}}{2(n_{1}-1)}\xi^{\alpha}(\textbf{a},\pi_{1})-\frac{n_{2}}{2(n_{2}+1)}\xi^{\alpha}(\textbf{a},\pi_{2}).
\end{align}
\end{enumerate}

Based on the derivation above, we have established the following theorem.

\begin{theorem}
Suppose that $P =\{\pi_1,\pi_2,...\pi_{k}\}$ is a partition of the data, and $P^a=\{\pi_{1}^{-},$ $\pi_{2}^{+},...,\pi_{k}\}$ is the partition
obtained by moving point $\mathbf a$ from $\pi_1$ to $\pi_2$. Then
\begin{equation}\label{Thm move one point}
W^{\alpha}(P)-W^{\alpha}(P')=\frac{n_{1}}{2(n_{1}-1)}\xi^{\alpha}(\textbf{a},\pi_{1}) -\frac{n_{2}}{2(n_{2}+1)}\xi^{\alpha}(\textbf{a},\pi_{2}).
\end{equation}
\end{theorem}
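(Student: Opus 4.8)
\emph{Proof strategy.} The statement is a pure algebraic identity, so the plan is simply to assemble it from the cluster-by-cluster computations in steps (i)--(vii). The one structural observation that makes everything local is that $W^{\alpha}$ is additive over clusters: the summand contributed by a cluster $\pi_j$ is $\tfrac{n_j}{2}G^{\alpha}(\pi_j,\pi_j)$, and (after the one normalization check that, in this section's convention $G^{\alpha}(\pi_j,\pi_j)=\tfrac{1}{2n_j}\sum_i\sum_j|\cdot|^{\alpha}$, this summand is exactly $G^{\alpha}(\pi_j,\pi_j)$ itself). Passing from $P$ to $P'$ (the partition called $P^{a}$ in the hypothesis) replaces $\pi_1$ by $\pi_1^{-}$ and $\pi_2$ by $\pi_2^{+}$ and leaves $\pi_3,\dots,\pi_k$ untouched, so all but two summands cancel and
\[
W^{\alpha}(P)-W^{\alpha}(P')=\bigl[G^{\alpha}(\pi_1,\pi_1)-G^{\alpha}(\pi_1^{-},\pi_1^{-})\bigr]+\bigl[G^{\alpha}(\pi_2,\pi_2)-G^{\alpha}(\pi_2^{+},\pi_2^{+})\bigr].
\]
So it suffices to evaluate the two brackets, and these are exactly what steps (i)--(vi) do.

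For the first bracket I would begin from the matrix $\bigl(|x_i^{1}-x_j^{1}|^{\alpha}\bigr)$, which is symmetric with zero diagonal: deleting the point $\textbf{a}$ removes one row and one column, so $\sum_{i}^{n_1-1}\sum_{j}^{n_1-1}|x_i^{-1}-x_j^{-1}|^{\alpha}=\sum_{i}^{n_1}\sum_{j}^{n_1}|x_i^{1}-x_j^{1}|^{\alpha}-2\sum_{i}^{n_1}|x_i^{1}-\textbf{a}|^{\alpha}$, the term $|\textbf{a}-\textbf{a}|^{\alpha}=0$ accounting for the absence of a correction on the diagonal. Substituting this into $G^{\alpha}(\pi_1,\pi_1)-G^{\alpha}(\pi_1^{-},\pi_1^{-})$ and simplifying gives (\ref{diff Gpi1}); expanding $\tfrac{n_1}{2(n_1-1)}\xi^{\alpha}(\textbf{a},\pi_1)$ directly from the definition (\ref{epi1}) gives (\ref{weighted epi1}); the two expressions coincide, which is (\ref{Gpi1 weighted epil}). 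Thus the first bracket equals $\tfrac{n_1}{2(n_1-1)}\xi^{\alpha}(\textbf{a},\pi_1)$.

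The second bracket is the mirror image: adjoining $\textbf{a}$ to $\pi_2$ appends a row and column to the symmetric, zero-diagonal matrix $\bigl(|x_i^{2}-x_j^{2}|^{\alpha}\bigr)$, so $\sum_{i}^{n_2+1}\sum_{j}^{n_2+1}|x_i^{+2}-x_j^{+2}|^{\alpha}=\sum_{i}^{n_2}\sum_{j}^{n_2}|x_i^{2}-x_j^{2}|^{\alpha}+2\sum_{i}^{n_2}|x_i^{2}-\textbf{a}|^{\alpha}$; this yields (\ref{diff Gpi2}), and matching it against $\tfrac{n_2}{2(n_2+1)}\xi^{\alpha}(\textbf{a},\pi_2)$ from (\ref{epi2}) gives (\ref{Gpi2 weighted epi2}), so the second bracket equals $-\tfrac{n_2}{2(n_2+1)}\xi^{\alpha}(\textbf{a},\pi_2)$. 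Adding the two brackets is precisely (\ref{first variation}), which is the claimed identity (\ref{Thm move one point}). I would also remark that nothing in this argument uses $\alpha\le 2$: the identity holds for every $\alpha>0$, the restriction $0<\alpha\le 2$ being needed only elsewhere to guarantee nonnegativity of $W^{\alpha}$ and $B^{\alpha}$.

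There is no genuine obstacle here — the proof is careful bookkeeping — but three spots need attention. First, the factor $2$ produced by symmetry of the distance matrix together with the vanishing diagonal term: an off-by-one there propagates through every subsequent line. Second, keeping the four distinct normalizing denominators $n_1,\ n_1-1,\ n_2,\ n_2+1$ straight; the derivation as typeset contains slips (e.g.\ an upper limit $n_1-k$ and a denominator $n_2(n_1+1)$) that should read $n_1-1$ and $n_2(n_2+1)$. Third, the signs: removing $\textbf{a}$ from $\pi_1$ returns $\xi^{\alpha}(\textbf{a},\pi_1)$ with a plus sign while inserting it into $\pi_2$ subtracts $\xi^{\alpha}(\textbf{a},\pi_2)$, consistent with the reading that $W^{\alpha}$ strictly decreases exactly when $\textbf{a}$ is "closer" to $\pi_2$ than to $\pi_1$ in the weighted energy sense $\tfrac{n_2}{2(n_2+1)}\xi^{\alpha}(\textbf{a},\pi_2)<\tfrac{n_1}{2(n_1-1)}\xi^{\alpha}(\textbf{a},\pi_1)$.
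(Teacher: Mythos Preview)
Your proposal is correct and takes essentially the same approach as the paper: the derivation in steps (i)--(vii) that precedes the theorem statement \emph{is} the paper's proof, and you have faithfully reproduced its structure --- isolate the two clusters that change, use symmetry and the zero diagonal of the distance matrix to rewrite each bracket, and match against the appropriate scalar multiple of $\xi^{\alpha}$. Your added observations (the normalization convention making each summand of $W^{\alpha}$ equal to $G^{\alpha}(\pi_j,\pi_j)$ as written in this section, the typos $n_1-k$ and $n_2(n_1+1)$, and the remark that the identity is purely algebraic and holds for every $\alpha>0$) are all accurate and helpful, but do not constitute a different route.
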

Similar to the Hartigan and Wong $k$-means algorithm, the $k$-groups algorithm moves point $\bf{a}$ from cluster $\pi_1$ to $\pi_2$ if
$$
\frac{n_{1}}{2(n_{1}-1)}\xi^{\alpha}(\textbf{a},\pi_{1}) -\frac{n_{2}}{2(n_{2}+1)}\xi^{\alpha}(\textbf{a},\pi_{2})
$$ is positive. Otherwise point $\bf{a}$ remains in cluster $\pi_1$. Based on the computation above, we propose the following $k$-groups algorithm.

\par \textbf{Notation} Let $N$ be the total sample size of observations, $M$ be the dimension of the sample, and $K$ be the prespecified number of clusters. The number of points in cluster $\pi_i$ is denoted by $n_i,$ $i=1,...,K)$. The two-sample energy statistic between point $I$ to cluster $\pi_i$ is denoted by $\xi^{\alpha}(I,\pi_i)$. The $k$-groups algorithm is the following.
\subsection*{K-groups Clustering Algorithm}
\begin{itemize}
 \item[Step 1] For each point $I$, $I= 1,...,N$, randomly assign $I$ to cluster $\pi_i, i=1,...,K$. Let $\pi(I)$ represent the cluster where $I$ belongs, and $n(\pi(I))$ represent the size of cluster $\pi(I)$.
 \item[Step 2] For each point $I,$ $I= 1,...,N)$, compute
 $$
 E_1=\frac{n(\pi(I))}{2(n(\pi(I))-1)}\xi^{\alpha}(I,\pi(I))
 $$ and
 $$
 E_2=min \left[\frac{n(\pi_i)}{2(n(\pi_i)+1)}\xi^{\alpha}(I,\pi_i)\right]
 $$
 for all clusters $\pi_i$, $\pi\ne \pi(I)$. If $E_1$ is less than $E_2$, observation $I$ remains in cluster $\pi(I)$. Otherwise, move the point $I$ to cluster $\pi$, and update clusters $\pi(I)$ and $\pi$.
 \item[Step 3] Stop if there is no relocation in the last $N$ steps.
\end{itemize}

\subsection{K-means as a Special Case of K-groups}

In this section we show that $k$-means is a special case of $k$-groups when $\alpha=2$. According to the properties of energy distance, we know that when $0<\alpha<2$, the energy distance $\xi^{\alpha}(X,Y)=0$ if and only if random variables $X$ and $Y$ follow the same statistical distribution. However, when $\alpha =2$, we have $\xi^{\alpha}(X,Y)=0$ whenever $E(X)=E(Y)$.

Theorem \ref{k-g k-m Thm 1} shows that $k$-groups contains the $k$-means  algorithm as a special case when $\alpha = 2$, by showing that Hartigan and Wong's $k$-means algorithm has the same objective function as $k$-groups when $\alpha=2$.

\begin{proposition} \label{k-g k-m p1}
For the $k$-groups algorithm (first variation) with exponent $\alpha=2$
 \begin{equation*}
  \frac{n_i}{2}G^{\alpha}(\pi_i,\pi_i)=\sum^{n_i}_{l=1}x_l^2-n_ic_i^2,
 \end{equation*}
where $c_{i}=\frac{1}{n_i}\sum^{n_i}_{j=1}x_j$,and $x_j\in \pi_i$, $j=1,...n_i$
\end{proposition}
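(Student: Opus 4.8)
The plan is to prove this by a direct expansion of the squared Euclidean distances appearing in $G^{2}(\pi_i,\pi_i)$, followed by a resummation that isolates the centroid $c_i$. First I would substitute $\alpha=2$ into the definition of dispersion, using the convention $G^{\alpha}(A,B)=\frac{1}{n_1 n_2}\sum_i\sum_m|a_i-b_m|^{\alpha}$, so that $G^{2}(\pi_i,\pi_i)=\frac{1}{n_i^{2}}\sum_{j=1}^{n_i}\sum_{l=1}^{n_i}|x_j-x_l|^{2}$ and hence
\[
\frac{n_i}{2}G^{2}(\pi_i,\pi_i)=\frac{1}{2 n_i}\sum_{j=1}^{n_i}\sum_{l=1}^{n_i}|x_j-x_l|^{2}.
\]

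Next I would expand the summand via $|x_j-x_l|^{2}=|x_j|^{2}-2\langle x_j,x_l\rangle+|x_l|^{2}$, which holds coordinatewise and reduces to $x_j^{2}-2x_jx_l+x_l^{2}$ in the univariate notation of the statement. Summing over $j$ and $l$ separately, the two square terms each contribute $n_i\sum_{l=1}^{n_i}x_l^{2}$, while the cross term collapses to $-2\bigl(\sum_{j}x_j\bigr)\bigl(\sum_{l}x_l\bigr)$, giving
\[
\sum_{j=1}^{n_i}\sum_{l=1}^{n_i}|x_j-x_l|^{2}=2 n_i\sum_{l=1}^{n_i}x_l^{2}-2\Bigl(\sum_{l=1}^{n_i}x_l\Bigr)^{2}.
\]
Dividing by $2 n_i$ and using $c_i=\frac{1}{n_i}\sum_{l}x_l$, so that $\frac{1}{n_i}\bigl(\sum_{l}x_l\bigr)^{2}=n_i c_i^{2}$, yields $\sum_{l}x_l^{2}-n_i c_i^{2}$ as claimed.

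I do not anticipate a genuine obstacle: this is the classical identity expressing the sum of pairwise squared distances within a cluster as twice $n_i$ times the within-cluster sum of squares about the mean. The one point I would be careful to state explicitly is the bookkeeping of the normalizing constants, since the first-variation formulas earlier in the paper are written with a $\tfrac{1}{2 n_i}$ prefactor for $G^{\alpha}(\pi_i,\pi_i)$ whereas the proposition requires the $\tfrac{1}{n_i^{2}}$ normalization from the definition of dispersion; I would flag which normalization is in force so that the stated equality is consistent. The identity also extends verbatim to $R^{M}$ with $x_l^{2}$ read as $|x_l|^{2}$ and $c_i$ the vector mean, which is the form needed in the subsequent comparison with Hartigan and Wong's $k$-means objective.
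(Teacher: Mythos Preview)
Your proposal is correct and follows essentially the same route as the paper's own proof: expand $|x_j-x_l|^{2}$, sum the three resulting terms over both indices, and recognize $\sum_j\sum_l x_jx_l=(\sum_l x_l)^2=n_i^2c_i^2$ to obtain $\sum_l x_l^2-n_ic_i^2$. Your remark about the two normalizations is well taken (the paper's earlier $\tfrac{1}{2n_i}$ display is just $\tfrac{n_i}{2}\cdot\tfrac{1}{n_i^2}$, so there is no inconsistency), and the multivariate reading you mention is indeed the one implicitly used downstream.
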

\begin{proof}
\begin{align}
\notag \frac{n_i}{2}G^{2}(\pi_i,\pi_i)&=\frac{1}{2n_i}\sum_{l=1}^{n_i}\sum_{m=1}^{n_i}|x_l-x_m|^2\\
\notag                            &=\frac{1}{2n_i}\sum_{l=1}^{n_i}\sum_{m=1}^{n_i}(x_l^2-2x_lx_m+x_m^2)\\
\notag                            &=\frac{1}{2n_i}[n_i\sum_{l=1}^{n_i}x_l^2-2\sum_{l=1}^{n_i}\sum_{m=1}^{n_i}x_lx_m+n_i\sum_{m=1}^{n_i}x_m^2]\\
\notag                            &=\frac{1}{2n_i}[2n_i\sum_{l=1}^{n_i}x_l^2-2\sum_{l=1}^{n_i}\sum_{m=1}^{n_i}x_lx_m]\\
\notag                            &=\frac{1}{2n_i}[2n_i\sum_{l=1}^{n_i}x_l^2-2n_i^2c_i^2]\\
\label{k-groups}                  &=\sum^{n_i}_{l=1}x_l^2-n_ic_i^2.
\end{align}
\end{proof}

\begin{theorem} \label{k-g k-m Thm 1}
When $\alpha=2$, the $k$-groups algorithm and the Hartigan and Wong $k$-means algorithm have the same objective function.
\end{theorem}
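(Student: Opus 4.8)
The plan is to reduce the statement to Proposition \ref{k-g k-m p1} together with the elementary parallel–axis (bias–variance) identity. Recall that the $k$-groups criterion, equation (\ref{diff of w}), is to minimize $W^{\alpha}(\pi_1,\dots,\pi_k)=\sum_{j=1}^{k}\frac{n_j}{2}G^{\alpha}(\pi_j,\pi_j)$, whereas the Hartigan and Wong $k$-means criterion is the objective stated in Section 1.1 with unit weights $\omega_x\equiv 1$ and quadratic dissimilarity $d(x,y)=|x-y|^2$, namely to minimize $\sum_{j=1}^{k}\sum_{x\in\pi_j}|x-c_j|^2$, where $c_j$ is the mean of $\pi_j$. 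So it suffices to show that, for $\alpha=2$, these two functions of the partition $\{\pi_1,\dots,\pi_k\}$ coincide.

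First I would fix a single cluster $\pi_j$ and invoke Proposition \ref{k-g k-m p1} to get $\frac{n_j}{2}G^{2}(\pi_j,\pi_j)=\sum_{x\in\pi_j}x^2-n_jc_j^2$. Next I would expand the corresponding $k$-means term, using $\sum_{x\in\pi_j}x=n_jc_j$:
\begin{equation*}
\sum_{x\in\pi_j}|x-c_j|^2=\sum_{x\in\pi_j}x^2-2c_j\sum_{x\in\pi_j}x+n_jc_j^2=\sum_{x\in\pi_j}x^2-n_jc_j^2,
\end{equation*}
so the two per-cluster quantities are literally equal. Summing over $j=1,\dots,k$ then yields
\begin{equation*}
W^{2}(\pi_1,\dots,\pi_k)=\sum_{j=1}^{k}\Bigl(\sum_{x\in\pi_j}x^2-n_jc_j^2\Bigr)=\sum_{j=1}^{k}\sum_{x\in\pi_j}|x-c_j|^2,
\end{equation*}
which is exactly the Hartigan–Wong objective. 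Hence the two algorithms minimize the same function of the partition, and in particular they have the same set of minimizers; one may also note that $T^{2}(\pi_1,\dots,\pi_k)=\sum_{x}x^2-N\bar x^2$ (apply Proposition \ref{k-g k-m p1} to the whole data set) is partition-independent, so the decomposition $T^2=W^2+B^2$ shows maximizing between-cluster dispersion $B^2$ is the same as the usual $k$-means between-cluster sum of squares.

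I do not expect a genuine obstacle: the entire analytic content sits in Proposition \ref{k-g k-m p1}, and the theorem is the bookkeeping step of summing over clusters and recognizing the parallel–axis identity. The only point needing a word of care is the normalization — the factor $n_j/2$ attached to $G^{\alpha}(\pi_j,\pi_j)$ in $W^{\alpha}$ versus the bare sum in the $k$-means objective — but Proposition \ref{k-g k-m p1} has already calibrated this. A secondary remark is that the proposition is phrased for scalar observations, consistent with the paper's claim that $k$-means is recovered ``for univariate data''; for $x\in R^{M}$ the same computation goes through verbatim with $x^2$ replaced by $|x|_M^2$ and products by inner products via $\langle x,y\rangle=\tfrac12(|x|^2+|y|^2-|x-y|^2)$. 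If one also wished to match the iterative rules rather than just the objectives, one would specialize the first-variation formula (\ref{Thm move one point}) to $\alpha=2$ and check it reproduces Hartigan and Wong's reallocation statistic, but that is not required by the statement as given.
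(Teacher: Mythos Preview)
Your proposal is correct and follows essentially the same route as the paper: invoke Proposition \ref{k-g k-m p1} per cluster, identify $\sum_{x\in\pi_j}x^2-n_jc_j^2$ with $\sum_{x\in\pi_j}|x-c_j|^2$ via the parallel--axis identity, and sum over $j$. The paper's proof is terser---it asserts the per-cluster equality directly from the proposition without writing out the expansion---but the content is identical; your additional remarks about $T^2$, the multivariate extension, and the update formula are sound but not part of the paper's argument.
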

\begin{proof}
\begin{displaymath}
 min_{c_{i}, 1\le i\le k}\sum_{i=1}^k\sum_{x_j\in \pi_{i}} (x_j-c_{i})^2,
\end{displaymath}
and the objective function for $k$-groups is
\begin{displaymath}
min_{\pi_1,...,\pi_k}\sum_{i=1}^{k}\frac{n_i}{2}G^{\alpha}(\pi_i,\pi_i).
\end{displaymath}
By Proposition \ref{k-g k-m p1}
 $$
 \sum_{x_j\in \pi_{i}}(x_j-c_{i})^2=\frac{n_i}{2}G^{2}(\pi_i,\pi_i),
 $$
 for all $i=1,...,k$. Hence, when $\alpha =2$, $k$-groups and $k$-means have the same objective function.
\end{proof}

Next we show that the update formula of $k$-groups and Hartigan and Wong's $k$-means algorithm are the same when $\alpha=2$.

\begin{proposition} \label{k-g k-m p3}
Suppose that a point $I$ belongs to cluster $L$, and the sample size of $L$ is $n$. Then
\begin{equation}\label{update1}
\frac{n}{2(n-1)}\xi^{2}(I,L)=\frac{n\cdot D(I,L)^2}{n-1},
\end{equation}
and
\begin{equation} \label{update2}
\frac{n}{2(n+1)}\xi^{2}(I,L)=\frac{n \cdot D(I,L)^2}{n+1},
\end{equation}
where
\begin{equation}\label{kmeans update formula}
D(I,L)^2=\left(I-\frac{\sum_{i=1}^n x_i}{n}\right)^2.
\end{equation}
is the $k$-means updating formula.

Hence in the special case $\alpha=2$, $k$-groups and $k$-means
algorithms apply the same updating formula.
\end{proposition}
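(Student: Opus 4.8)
The plan is to reduce everything to the identity already proved in Proposition \ref{k-g k-m p1}, namely $\frac{m}{2}G^{2}(\pi,\pi)=\sum_{x\in\pi}x^2-mc^2$ where $c$ is the mean of $\pi$ and $m=|\pi|$. The key observation is that the one-point energy statistic $\xi^{2}(I,L)$ can be written entirely in terms of such within-cluster dispersions: if $L^{+}$ denotes $L$ with the point $I$ adjoined, then $\xi^{2}(I,L)=2\bigl(G^{2}(\{I\},L)\bigr)-G^{2}(L,L)$ is, up to the weighting factors, exactly the quantity that appears in the derivation preceding Theorem \ref{Thm move one point}. So the first step is to apply equations (\ref{Gpi1 weighted epil}) and (\ref{Gpi2 weighted epi2}) (with $\alpha=2$) to express the two weighted statistics in (\ref{update1}) and (\ref{update2}) as differences of $\frac{m}{2}G^{2}$-terms:
\begin{align}
\notag \frac{n}{2(n-1)}\xi^{2}(I,L) &= \frac{n}{2}G^{2}(L,L)-\frac{n-1}{2}G^{2}(L^{-},L^{-})\cdot\frac{n}{n-1}\quad\text{(rearranged)},\\
\notag \frac{n}{2(n+1)}\xi^{2}(I,L) &= \text{the analogous difference for }L^{+}.
\end{align}

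Next I would substitute the closed form from Proposition \ref{k-g k-m p1} into each $\frac{m}{2}G^{2}$ term. Writing $S=\sum_{i=1}^{n}x_i$ so that $c=S/n$, the dispersion of $L$ contributes $\sum x_i^2 - S^2/n$, and the dispersion of $L^{-}$ (which is $L$ with $I$ removed, hence has $n-1$ points summing to $S-I$) contributes $\bigl(\sum x_i^2 - I^2\bigr) - (S-I)^2/(n-1)$. Taking the weighted difference and simplifying, the $\sum x_i^2$ terms cancel and one is left, after combining the fractions over the common denominator $n(n-1)$, with a multiple of $\bigl(S - nI\bigr)^2$; dividing out gives exactly $\frac{n}{n-1}\bigl(I-S/n\bigr)^2 = \frac{n\,D(I,L)^2}{n-1}$. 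The computation for $L^{+}$ is identical in structure, with $n-1$ replaced by $n+1$ and the sign of the added point handled the same way, yielding $\frac{n\,D(I,L)^2}{n+1}$. This establishes (\ref{update1}) and (\ref{update2}).

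Finally, to conclude that the two algorithms use the same updating rule, I would recall that Hartigan and Wong's $k$-means moves a point $I$ out of its current cluster $L$ (size $n$) into cluster $L'$ (size $n'$) precisely when $\frac{n\,D(I,L)^2}{n-1} > \frac{n'\,D(I,L')^2}{n'+1}$, and observe that the $k$-groups decision criterion from Theorem \ref{Thm move one point}, specialized to $\alpha=2$, is $\frac{n}{2(n-1)}\xi^{2}(I,L) - \frac{n'}{2(n'+1)}\xi^{2}(I,L') > 0$, which by (\ref{update1}) and (\ref{update2}) is the same inequality. The only real obstacle is bookkeeping: one must be careful that the $L^{-}$ and $L^{+}$ sums use the correct point counts and correct partial sums $S-I$ versus $S$, and that the weighting factors $\frac{n}{2(n-1)}$ and $\frac{n}{2(n+1)}$ are paired with the right difference; once the algebra is organized around $S=\sum x_i$ the cancellations are routine and the squared term $(I-c)^2$ falls out cleanly.
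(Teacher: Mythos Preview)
Your argument is correct, but it takes a different route from the paper. The paper simply expands the definition
\[
\xi^{2}(I,L)=\frac{2}{n}\sum_{i=1}^{n}|I-x_i|^{2}-\frac{1}{n^{2}}\sum_{i,j}|x_i-x_j|^{2}
\]
term by term and collapses it to $2\bigl(I-\tfrac{1}{n}\sum x_i\bigr)^{2}$, i.e.\ it proves the single identity $\tfrac{1}{2}\xi^{2}(I,L)=D(I,L)^{2}$, from which both (\ref{update1}) and (\ref{update2}) follow by multiplying through by $\tfrac{n}{n-1}$ and $\tfrac{n}{n+1}$. Your approach instead feeds the weighted statistics through the first-variation identities (\ref{Gpi1 weighted epil}) and (\ref{Gpi2 weighted epi2}) and then evaluates the resulting dispersion differences via Proposition~\ref{k-g k-m p1}. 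This is a bit longer---you must run the $L^{-}$ and $L^{+}$ calculations separately---but it has the virtue of making the logical dependence explicit: the equality of update rules is exhibited as a direct corollary of the dispersion identity already established, rather than as an independent computation.

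One small correction: your first displayed line carries a stray factor. Equation (\ref{Gpi1 weighted epil}), read in the notation of Proposition~\ref{k-g k-m p1}, already says
\[
\frac{n}{2(n-1)}\xi^{2}(I,L)=\frac{n}{2}G^{2}(L,L)-\frac{n-1}{2}G^{2}(L^{-},L^{-}),
\]
with no additional $\tfrac{n}{n-1}$ on the right. Once you drop that factor, your substitution of $\sum x_i^{2}-S^{2}/n$ and $(\sum x_i^{2}-I^{2})-(S-I)^{2}/(n-1)$ goes through exactly as you describe, and the difference indeed collapses to $\tfrac{(nI-S)^{2}}{n(n-1)}=\tfrac{n}{n-1}D(I,L)^{2}$.
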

\begin{proof}
We only need to prove that $\frac{1}{2}\xi^2(I,L)=D(I,L)^2$. Setting  $\alpha=2$, we have
\begin{equation} \label{energy update formula}
\xi^2(I,L)=\frac{2}{n}\sum_{i=1}^n|I-x_i|^2-\frac{1}{n^2}\sum_{i=1}^{n}\sum_{j=1}^{n}|x_i-x_j|^2.
\end{equation}
We can simplify equation (\ref{energy update formula}) as follows:
\begin{align}
\notag \xi^2(I,L)&=\frac{2}{n}\sum_{i=1}^n|I-x_i|^2-\frac{1}{n^2}\sum_{i=1}^{n}\sum_{j=1}^{n}|x_i-x_j|^2\\
\notag           &=\frac{2}{n}\left(nI^2-2I\sum_{i=1}^n x_i + \sum_{i=1}^{n} x_i^2\right)-\frac{1}{n^2}\sum_{i=1}^n\sum_{j=1}^n(x_i^2-2x_ix_j+x_j^2)\\
\notag           &=\left(2I^2-4I\frac{\sum_{i=1}^n x_i^2}{n}+2\frac{\sum_{i=1}^n x_i^2}{n}\right)-\frac{1}{n^2}(2n\sum_{i=1}^n x_i^2-2\sum_{i=1}^n\sum_{j=1}^nx_ix_j)\\
\notag           &=\left(2I^2-4I\frac{\sum_{i=1}^n x_i^2}{n}+2\frac{\sum_{i=1}^n x_i^2}{n}\right)-(2\frac{\sum_{i=1}^n x_i^2}{n}-2\frac{2\sum_{i=1}^n\sum_{j=1}^nx_ix_j}{n^2})\\
\notag           &=2\left(I^2-2I\frac{\sum_{i=1}^n x_i}{n}+\frac{2\sum_{i=1}^n\sum_{j=1}^nx_ix_j}{n^2}\right)\\
\notag           &=2\left(I-\frac{\sum_{i=1}^n x_i}{n}\right)^2.
\end{align}
Thus
\begin{equation*}
\frac{1}{2}\xi^2(I,L)=D(I,L)^2.
\end{equation*}
With similar steps we obtain (\ref{update2}).
\end{proof}

\subsection{Second Variation Algorithm}
The objective of $k$-groups is to find a global minimum of the within-cluster sum of dispersion. However, in most cases we can only find a local minimum by first variation method. Usually in order to solve this problem, one can try different initial random starts, and choose the best result with minimum within-cluster dispersion. In order to more closely achieve  global optimization, we consider modifying our algorithm to move more than one observation at each step. The reasons to move more than one point are the following.
\begin{itemize}
\item It allows the algorithm to move from a local optimum obtained by the first variation.
\item Based on the result of \cite{szekely2004testing}, if two samples follow different distributions, the weighted two-sample energy statistic
     $$
     T_{X,Y}=\left(\frac{n_1n_2}{n_1+n_2}\right)\mathcal{E}_{n_1,n_2}(\mathbf{X},\mathbf{Y})
     $$
     will approach infinity stochastically as $N=n_1+n_2$ tends to infinity and neither $\frac{n_1}{N}$ nor $\frac{n_2}{N}$ goes to zero.
\item Energy statistics admit a nice update formula for moving $m$ points. We will show that the difference of within-cluster sum of dispersion equals the difference of weighted two-sample energy statistics if we move any $m$ $(m>1)$ points from cluster to cluster.
\end{itemize}

\begin{definition}
An $m^{th}$ variation of a partition $P$ is a partition $P^{(m)}$ obtained from $P$ by removing $m$ points $\{a_1, a_2,..., a_m\}$ from a cluster $\pi_{i}$ of $P$ and assigning these points to an existing cluster $\pi_j$ of $P$, $i \ne j$.
\end{definition}

The $k$-groups second variation algorithm moves two points in each step from one cluster to another cluster. For the implementation, we need to derive the corresponding difference in within cluster dispersion, $W^\alpha(P) - W^\alpha(P^{(2)})$. Below, we derive the general result for moving $m$ $(m > 1)$ points. The formulas for the second variation are given in (\ref{update-mth}) and (\ref{m-th rule}) with $m=2$.

\begin{theorem}\label{mth variation th}
Suppose $P=\{\pi_1,\pi_2,...\pi_k\}$ is a partition, and $P^{(m)}=\{\pi_{1}^{-},\pi_{2}^{+},$ $..., \pi_{k}\}$ is a $m^{th}$ variation of $P$ by moving points $\{a_1, a_2,...,a_m\}$ from cluster $\pi_1$ to $\pi_2$. Then
\begin{align}\notag
W^{\alpha}(P)-W^{\alpha}(P^{(m)})&=\frac{mn_{1}}{2(n_{1}-m)}\xi^{\alpha}(\{a_1,...a_m\},\pi_{1})
\\ & \qquad -\frac{mn_{2}}{2(n_{2}+m)}\xi^{\alpha}(\{a_1,...a_m\},\pi_{2}). \label{update-mth}
\end{align}
\end{theorem}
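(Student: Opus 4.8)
Let $A=\{a_1,\dots,a_m\}$ be the moved block and, for a finite set $\pi$, write $S(\pi):=\sum_{x,y\in\pi}|x-y|^{\alpha}$, so the contribution of a cluster $\pi$ to $W^{\alpha}$ is $S(\pi)/(2|\pi|)$. The plan is to rerun the seven-step bookkeeping (i)--(vii) that yields~(\ref{Thm move one point}), but now split each double sum over an affected cluster into three blocks instead of two. Moving $A$ from $\pi_1$ to $\pi_2$ leaves $\pi_3,\dots,\pi_k$ untouched, so only the $\pi_1$- and $\pi_2$-contributions to $W^{\alpha}$ change and
\[
W^{\alpha}(P)-W^{\alpha}(P^{(m)})=\Bigl[\tfrac{S(\pi_1)}{2n_1}-\tfrac{S(\pi_1^{-})}{2(n_1-m)}\Bigr]+\Bigl[\tfrac{S(\pi_2)}{2n_2}-\tfrac{S(\pi_2^{+})}{2(n_2+m)}\Bigr];
\]
it suffices to evaluate the two brackets separately. (One could instead induct on $m$, peeling off one point at a time via~(\ref{Thm move one point}), but that is messier because the intermediate clusters and their sizes keep changing; the direct three-block computation is cleaner.)

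For the first bracket, write $\pi_1=\pi_1^{-}\cup A$ as a disjoint union, so that $S(\pi_1)=S(\pi_1^{-})+2\sum_{k=1}^{m}\sum_{x\in\pi_1^{-}}|a_k-x|^{\alpha}+\sum_{k,l=1}^{m}|a_k-a_l|^{\alpha}$. Substituting this, clearing the common denominator $2n_1(n_1-m)$, and regrouping the cross sum and the self sum, the first bracket becomes $\frac{mn_1}{2(n_1-m)}\,\xi^{\alpha}(A,\pi_1)$ plus a leftover equal to a fixed multiple of $\sum_{k,l=1}^{m}|a_k-a_l|^{\alpha}$, where $\xi^{\alpha}(A,\pi_1)$ is the two-sample energy statistic between $A$ and $\pi_1$ (with the $A$-versus-$A$ term normalized as in $\mathcal{E}^{\alpha}_{n_1,n_2}$), i.e.\ the natural $m$-point extension of~(\ref{epi1}). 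The second bracket is the mirror image: with $\pi_2^{+}=\pi_2\cup A$ and common denominator $2n_2(n_2+m)$, it becomes $-\frac{mn_2}{2(n_2+m)}\,\xi^{\alpha}(A,\pi_2)$ plus another fixed multiple of $\sum_{k,l=1}^{m}|a_k-a_l|^{\alpha}$.

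The decisive step is to add the two brackets and check that the two within-block leftovers cancel: one records the $a$-to-$a$ distances that drop out of the count when $A$ leaves $\pi_1$, the other records those same distances newly appearing inside $\pi_2^{+}$, and they come out equal with opposite signs. What remains is exactly~(\ref{update-mth}). As a consistency check, $m=1$ forces $\sum_{k,l}|a_k-a_l|^{\alpha}=0$, both leftovers vanish, and the derivation collapses term by term onto that of~(\ref{Thm move one point}). I expect the only real difficulty to be clerical---keeping the three block coefficients and their signs straight when putting each bracket over a common denominator---while the single genuinely non-mechanical point is the cancellation of the leftover $A$-versus-$A$ terms, which is what permits the clean two-sample-energy form on the right of~(\ref{update-mth}). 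The whole argument is a purely algebraic identity that uses nothing about $|x-y|^{\alpha}$ beyond symmetry, so no restriction on $\alpha$ is needed for~(\ref{update-mth}) itself; the bound $0<\alpha\le2$ is required only for the nonnegativity of $W^{\alpha}$ and $B^{\alpha}$ invoked elsewhere.
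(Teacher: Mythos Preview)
Your plan is correct and is essentially the paper's own argument: the paper likewise splits $W^{\alpha}(P)-W^{\alpha}(P^{(m)})$ into the $\pi_1$-part and the $\pi_2$-part, shows each equals the corresponding $\frac{mn_i}{2(n_i\mp m)}\xi^{\alpha}(A,\pi_i)$ plus the common leftover $\frac{1}{2m}\sum_{i,j}|a_i-a_j|^{\alpha}$, and then subtracts so that these leftovers cancel. The only thing to add is that you should actually compute that constant (it is $\frac{1}{2m}$ on both sides) rather than appeal to the ``drop out/appear'' heuristic, since the cancellation is a coefficient identity, not just a bookkeeping symmetry.
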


\begin{proof}
We want to move $m$ points $\{a_1, a_2,...,a_m\}$ from cluster $\pi_{1}$ to another cluster $\pi_{2}$. Cluster $\pi_1^{-}$ represents cluster $\pi_1$ after removing $m$ points $\{a_1, a_2,...,a_m\}$, and cluster $\pi_2^{+}$ represents cluster $\pi_2$ after adding those $m$ points. Let $n_1$ and $n_2$ be the sizes of $\pi_1$ and $\pi_2$ before moving $m$ points. The two-sample energy statistic between the $m$ points $\{a_1, a_2,...,a_m\}$ and clusters $\pi_{1}$, $\pi_{2}$ are by definition
\begin{align}
\notag \xi^{\alpha}(\{a_1,...a_m\},\pi_{1})=&\frac{2}{m\cdot n_{1}}\sum_{i}^{n_1} \sum_{j}^{m}|x_{i}^{1}-a_j|^{\alpha}-\frac{1}{m^2}\sum_{i}^{m}\sum_{j}^{m}|a_{i}-a_{j}|^{\alpha}\\
\label{epi1 m}                      & \qquad
-\frac{1}{n_1^2}\sum_{i}^{n_{1}}\sum_{j}^{n_{1}}|x_i^{1}-x_j^{1}|^{\alpha},
\end{align}
and
\begin{align}
\notag \xi^{\alpha}(\{a_1,...a_m\},\pi_{2})=&\frac{2}{m\cdot n_{2}}\sum_{i}^{n_{2}} \sum_{j}^{m}|x_{i}^2-a_j|^{\alpha}-\frac{1}{m^{2}}\sum_{i}^{m}\sum_{j}^{m}|a_{i}-a_{j}|^{\alpha}\\ & \qquad
\label{epi2 m }
-\frac{1}{n_{2}^{2}} \sum_{i}^{n_{2}} \sum_{j}^{n_{2}} |x_{i}^{2}-x_{j}^{2}|^{\alpha}.
\end{align}

The $m$-th variation updating formula is derived as follows.
Similar to the derivation of first variation, we compute $\frac{n_1}{2}G^{\alpha}(\pi_{1},\pi_{1})-\frac{n_1-m}{2}G^{\alpha}(\pi_{1}^{-},\pi_{1}^{-})$ and $\frac{n_2+m}{2}G^{\alpha}(\pi_{2}^{+},\pi_{2}^{+})-\frac{n_2}{2}G^{\alpha}(\pi_{2},\pi_{2})$, as
\begin{align}
  \notag \frac{n_1}{2}  G^{\alpha}  (\pi_{1},\pi_{1}) &   -\frac{n_1-m}{2}G^{\alpha}(\pi_{1}^{-},\pi_{1}^{-})
  \\ \notag = &
  \frac{1}{2 n_{1}}\sum_{i}^{n_{1}}\sum_{j}^{n_{1}} |x_{i}^{1}-x_{j}^{1}|^{\alpha}-\frac{1}{2 \cdot
                 (n_{1}-m)}\sum_{i}^{n_{1}-m}\sum_{j}^{n_{1}-m} |x_{i}^{-1}-x_{j}^{-1}|^{\alpha}
 \\   \notag  = &
 \frac{1}{2 n_{1}}\sum_{i}^{n_{1}}\sum_{j}^{n_{1}}|x_{i}^{1}-x_{j}^{1}|^{\alpha}  -\frac{1}{2(n_{1}-m)} \times
 \notag \\ &
         \left\{ \sum_{i}^{n_{1}}\sum_{j}^{n_{1}}|x_{i}^{1}-x_{j}^{1}|^{\alpha}
  -2\sum_{i}^{n_1}\sum_{j}^{m}|x_{i}^{1}-a_j|^{\alpha}+\sum_{i}^{m} \sum_{j}^{m}|a_{i}-a_{j}|^{\alpha}\right\}  \notag
  \\   \notag & =\frac{1}{n_{1}-m}  \sum_{i}^{m}\sum_{j}^{n_{1}}|x_{i}^1-a_{j}|^{\alpha}-\frac{1}{2(n_{1}-m)}
                 \sum_{i}^{m}\sum_{j}^{m}|a_{i}-a_{j}|^{\alpha}
\\
 \label{diff Gpi1 m} & \qquad -\frac{m}{2 n_{1}(n_{1}-m)}\sum_{i}^{n_{1}}\sum_{j}^{n_{1}}|x_{i}^{1}-x_{j}^{1}|^{\alpha}.
 \end{align}
Multiply equation(\ref{epi1 m}) by $\frac{m\cdot n_{1}}{2(n_{1}-m)}$ to get
 \begin{align}
 \notag \frac{mn_{1}}{2(n_{1}-m)}\xi^{\alpha}(\{a_1,...a_m\},\pi_{1})=&\frac{1}{n_{1}-m}\sum_{i}^{n_1}\sum_{j}^{m}|x_{i}^{1}-a_{j}|^{\alpha}-\\
 \notag &\frac{n_{1}}{2m(n_{1}-m)}\sum_{i}^{m}\sum_{j}^{m}|a_{i}-a_{j}|^{\alpha}-\\
 \label{weight epi1 m}  &\frac{m}{2n_{1}(n_{1}-m)}\sum_{i}^{n_{1}}\sum_{j}^{n_{1}}|x_{i}^{1}-x_{j}^{1}|^{\alpha}.
 \end{align}
Then subtract equation (\ref{weight epi1 m}) from equation (\ref{diff Gpi1 m}) to obtain
\begin{align}
\notag                        \frac{n_1}{2}G^{\alpha}(\pi_{1},\pi_{1})&-\frac{n_1-m}{2}G^{\alpha}(\pi_{1}^{-},\pi_{1}^{-})\\
\label{Gpi1 weight epi1 m}    =&\frac{mn_{1}}{2(n_{1}-m)}\xi^{\alpha}(\{a_1,...a_m\},\pi_{1})+\frac{1}{2m}\sum_{i}^{m}\sum_{j}^{m}|a_{i}-a_{j}|^{\alpha}.
\end{align}
Following similar steps used in deriving the first variation formula, we have
\begin{align}
\notag \frac{n_2+m}{2}G^{\alpha}(\pi_{2}^{+},\pi_{2}^{+})&-\frac{n_2}{2}G^{\alpha}(\pi_{2},\pi_{2})\\
\label{Gpi2 weight epi2 m}=&\frac{mn_{2}}{2(n_{2}+m)}\xi^{\alpha} (\{a_1,...a_m\},\pi_{2})+\frac{1}{2m}\sum_{i}^{m}\sum_{j}^{m}|a_{i}-a_{j}|^{\alpha}.
\end{align}
Then subtracting equation (\ref{Gpi2 weight epi2 m}) from equation (\ref{Gpi1 weight epi1 m}) gives
\begin{align}
\notag \frac{n_1}{2}G^{\alpha}(\pi_{1},\pi_{1})&+\frac{n_2}{2}G^{\alpha}(\pi_{2},\pi_{2})-\frac{n_1-m}{2}G^{\alpha}(\pi_{1}^{-},\pi_{1}^{-})-\frac{n_2+m}{2}G^{\alpha}(\pi_{2}^{+},\pi_{2}^{+})\\
\label{m variation} =&\frac{mn_{1}}{2(n_{1}-m)}\xi^{\alpha}(\{a_1,...a_m\},\pi_{1})-\frac{mn_{2}}{2(n_{2}+m)} \xi^{\alpha}(\{a_1,...a_m\},\pi_{2}).
\end{align}
\end{proof}

Similar to first variation, we assign points $\{a_1, a_2,...,a_m\}$ to cluster $\pi_2$ if
\begin{equation}\label{m-th rule}
\frac{mn_{1}}{2(n_{1}-m)}\xi^{\alpha}(\{a_1,...a_m\},\pi_{1})-\frac{mn_{2}}{2(n_{2}+m)} \xi^{\alpha}(\{a_1,...a_m\},\pi_{2})
\end{equation}
is positive; otherwise we keep points $\{a_1, a_2,...,a_m\}$ in cluster $\pi_1$.

By Theorem \ref{mth variation th}, we have the objective function and updating formula for $m$-th variation $k$-groups clustering algorithms, $m > 1$. The second variation $k$-groups algorithm is the special case $m=2$.  However, the computation time to move more points is excessive. Suppose the total sample size is $N$, and we have $K$ clusters, with $K$ prespecified. For $k$-groups by first variation algorithm, we compute distance $NK$ times in each loop. If $m=2$, we compute distance $\frac{KN(N-1)}{2}$ times in each loop, because there are $\frac{N(N-1)}{2}$ combinations of two points. For the $m^{th}$ variation, we compute distance $\frac{CN!}{m!(N-m)!}$ times in each loop. The computation time will increase exponentially in $N$. Even though we have formulas for moving $m$ points, in practice the computation time is excessive for larger $m$. Here we let $m=2$ and implement the second variation algorithm by moving two points.

It is not practical to consider all possible combinations of two points. In fact, since our objective is to minimize the within-group sum of dispersion, we do not need to consider all possible combinations. We pair two points together if they have minimum distance, and we assume these two points should be assigned to the same cluster.

\textbf{Notation} Let even number $N$ be the total sample size of observations, $M$ be the dimension of the sample, and $K$ be the number of clusters, $K$ prespecified. The size of cluster $\pi_i,\, ( i=1,...,K)$ is denoted by $n_i$. The two-sample energy statistic between pair $II$ and cluster $\pi_i$ is denoted by $\xi^{\alpha}(II,\pi_i)$.

\subsection{K-groups Clustering Algorithm by Second Variation}

\begin{itemize}
\item[Step 1] Each pair of points $II$, $II=1,...,N/2$, is randomly assigned to a cluster $\pi_i\,(i=1,...,K)$. Let $\pi(II)$ represent the cluster containing pair $II$, and $n(\pi(II))$ represent the size of cluster $\pi(II)$.
\item[Step 2] For each pair $II\,(II=1,...,N/2)$, compute
$$
    E_1=\frac{n(\pi(II))}{n(\pi(II))-2}\xi^{\alpha}(II,\pi(II))
$$
and
$$
    E_2=\mathrm{min}\left[ \frac{n(\pi_i)}{n(\pi_i)+2}\xi^{\alpha}(II,\pi_i) \right]
$$
for all clusters $\pi_i$, where $\pi_i \neq \pi(II)$. If $E_1$ is less than $E_2$, pair $II$ remains in cluster $\pi(II)$; otherwise, move the pair $II$ to cluster $\pi_i$ with minimum value of $E_2$, and update cluster $\pi(II)$ and $\pi_i$.
\item[Step 3] Stop if there is no relocation in the last $\frac{N}{2}$ steps.
\end{itemize}

For an odd number $N$, we randomly take one point out. After applying $k$-groups by second variation, we assign that point to the cluster based on the updated formula of $k$-groups by first variation algorithm.

\section{Simulation Results}

A variety of cluster structures can be generated as mixtures of different distributions. Each of our simulated data sets was designed as a mixture, where each component of the mixture corresponds to a cluster. Each mixture distribution is simulated at a sample size $200$. We calculated average and standard error for validation indices diagonal (Diag), Kappa, Rand, and corrected Rand (cRand) based on $B=500$ iterations. In $k$-groups methods, for the mixture distributions which have finite first and second moments, we use $\alpha=1$; otherwise we use the smaller value of $\alpha=0.5$ to have finite moments $E|X-Y|^{\alpha}$. All algorithms were implemented in R \citep*{rcitation} and all simulations carried out in R.  Implemented $k$-groups algorithms are available upon request in an R package \emph{kgroups} \citep*{kgroups}. We want to compare $k$-groups by first variation, $k$-groups by second variation and $k$-means under different cluster structures.

\subsubsection*{Overlapping clusters}

Overlapping clusters are generated by location mixtures
with varying centers, so that the degree of separation of
clusters varies according to the distance between centers.
In these examples, a two component mixture was generated and
all algorithms set $k=2$.

Figure \ref{overlapping effect normal} displays the simulation results for 50\% normal location mixtures: \\
$0.5\,\mathrm{N}\,(0,1)+0.5\,\mathrm{N}\,(d,1)$, where $d=0.2,0.4,...,3$. The average cRand indices of the three algorithms are almost the same for each value of $d$. The results for symmetric normal mixtures suggest that both $k$-groups algorithms and $k$-means have similar performance when the clusters are normally distributed.

\begin{figure}[ht]
\begin{center}
\includegraphics[width=0.6\linewidth]{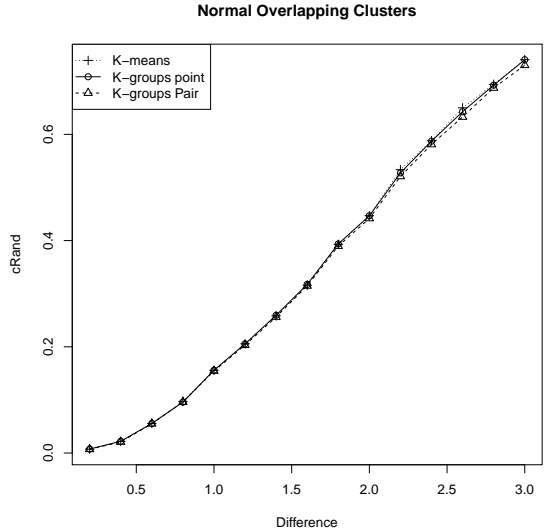}
\end{center}
\caption{Overlapping clusters effect for normal mixture distributions, $n=200,\, B=500$}
\label{overlapping effect normal}
\end{figure}

Figure \ref{overlapping effect lognormal} displays the simulation results for lognormal mixtures $0.5\,\mathrm{lognormal}(0,1)$ $+0.5\,\mathrm{lognormal}\,(d,1)$,\, where $d=0.5,1,...,10$. The average cRand indices of both $k$-groups algorithms dominate the $k$-means for each value of $d$. Thus, the results suggest that the $k$-groups algorithms have much better performance than $k$-means when clusters are strongly skewed, heavy tailed.

\begin{figure}[ht]
\begin{center}
\includegraphics[width=0.6\linewidth]{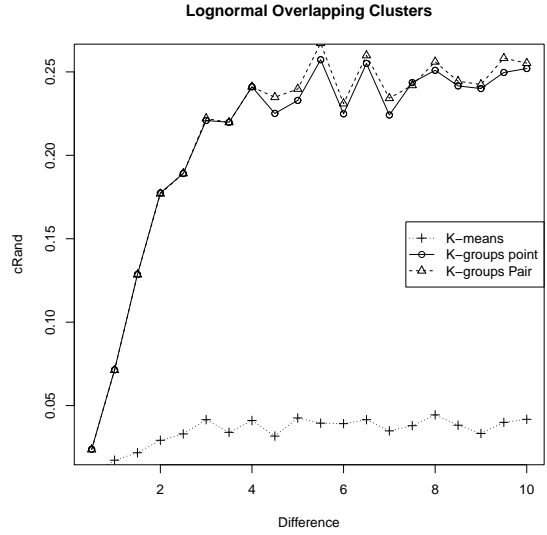}
\end{center}
\caption{Overlapping clusters effect for lognormal mixture distributions, $n=200,\, B=500$}
\label{overlapping effect lognormal}
\end{figure}

A Cauchy distribution does not have finite expectation, so neither
the $k$-means nor the $k$-groups ($\alpha \geq 1$) are valid.
However, the $k$-groups class of algorithms can apply a smaller
exponent $\alpha \in (0,1)$ on Euclidean distance such that
$E|X|^\alpha < \infty$.

Figure  \ref{overlapping effect cauchy} displays the simulation results for Cauchy mixtures  $0.5\,\mathrm{Cauchy}(0,1)+0.5\,\mathrm{Cauchy}$ $\,(d,1)$,\,where $d=0.2,0.4,...,3$, $\alpha=0.5$. The average cRand indices of both $k$-groups algorithms dominate the average cRand of $k$-means for each value of $d$.  Thus, the results suggest that both $k$-groups algorithms are more robust with respect to outliers and heavy tails.

\begin{figure}[ht]
\begin{center}
\includegraphics[width=0.6\linewidth]{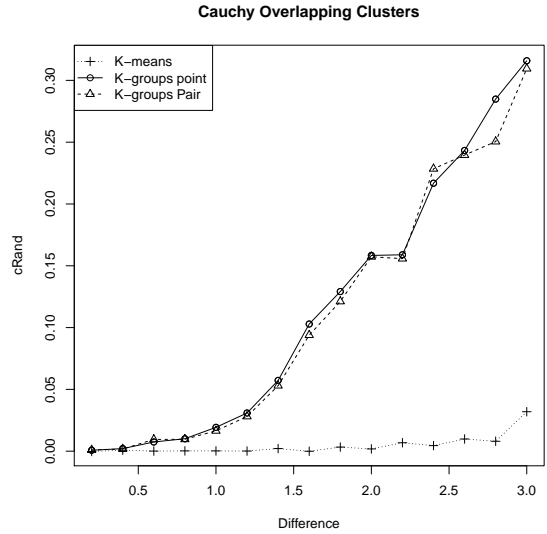}
\end{center}
\caption{Overlapping clusters effect for Cauchy mixture distributions, $n=200,\, B=500$}
\label{overlapping effect cauchy}
\end{figure}

\subsubsection*{Varying exponent on distance for k-groups}

In the following examples, the simulation design fixes the sample
size at $n=100$ while varying the exponent $\alpha$, $0<\alpha\leq 2$ on Euclidean distance for the $k$-groups methods. The $k$-means algorithm fixes $\alpha=2$, as shown in Theorem \ref{k-g k-m Thm 1}
and Proposition \ref{k-g k-m p1}.

Figure \ref{fig alpha effect normal} shows the results for normal mixtures
$0.5\,\mathrm{N}\,(0,1)+0.5\,\mathrm{N}\,(3,1)$\, with $\alpha= 0.2,0.4,...,2$. The average cRand indices of $k$-means and $k$-groups by first variation are very close. When $d=2$, $k$-means and $k$-groups by first variation have very close average cRand indices. The average cRand indices of $k$-groups by second variation are slightly lower than the other two algorithms. Generally, for each value of $\alpha$, the average cRand indices of both $k$-groups algorithms and $k$-means are very close. Thus, the results suggest that there is no $\alpha$ effect when clusters are normally distributed.

\begin{figure}[ht]
\begin{center}
\includegraphics[width=0.6\linewidth]{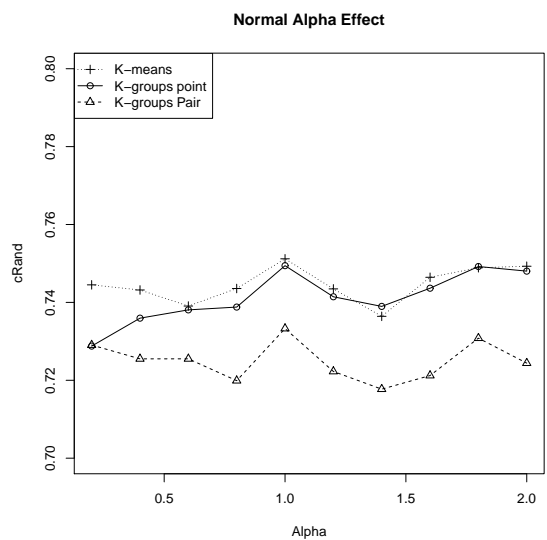}
\end{center}
\caption{Exponent $\alpha$ effect for normal mixture distributions, $n=200,\, B=1000$}
\label{fig alpha effect normal}
\end{figure}

Figure  \ref{fig alpha effect cauchy} shows results for Cauchy mixtures
$0.5$ Cauchy$(0,1)+0.5$ Cauchy $(3,1)$\, with varying exponent on distance $\alpha=0.2,0.4,...,2$. The average cRand indices of $k$-groups by first variation decrease as $\alpha$ increases, and when $\alpha=2$, the average cRand indices of $k$-groups by first variation and $k$-means are very close. In $k$-groups by second variation there are more stable average cRand indices than the other two algorithms. Thus, the results suggest that there is an $\alpha$ effect when clusters have an infinite first moment.

\begin{figure}[h!t]
\begin{center}
\includegraphics[width=0.6\linewidth]{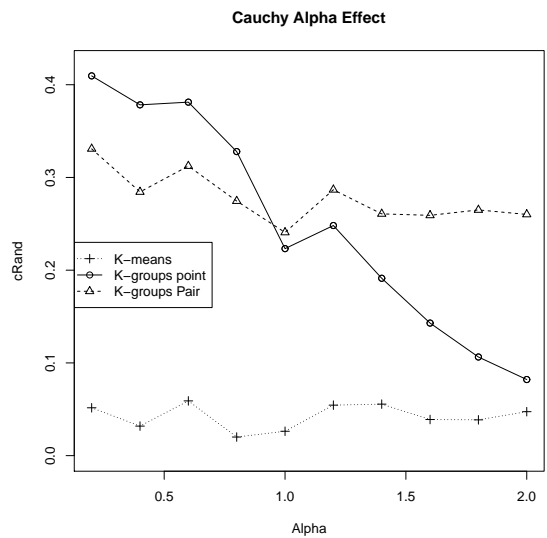}
\end{center}
\caption{Exponent $\alpha$ effect for Cauchy mixture distributions, $n=200,\, B=1000$}
\label{fig alpha effect cauchy}
\end{figure}

\subsubsection*{Effect of increasing dimension}

This set of simulations illustrates the effect of increasing
dimension on the performance of the clustering algorithms.

Table \ref{0.5 mv uniform (0,1) (0.3,0.7)} and Figure \ref{hduniform} summarize the simulation results for multivariate cubic mixtures  $0.5\,\mathrm{Cubic}^d\,(0,1)+0.5\,\mathrm{Cubic}^d\,(0.3,0.7)$. For each algorithm, the average Rand and cRand indices increase as the dimension $d$ increases.
The average Rand and cRand indices of these three algorithms are almost the same when $d<5$. However, the average Rand and cRand indices of both $k$-groups algorithms are consistently higher than $k$-means when $d>5$. Furthermore, the average Rand and cRand indices of $k$-groups by first variation approach $1$ as dimension $d$ increases. Thus, the results suggest that $k$-groups by first variation algorithm has better performance than the other two algorithms when clusters are cubic shaped in the multivariate case.

\begin{figure}[!ht]
\begin{center}
\includegraphics[width=0.6\linewidth]{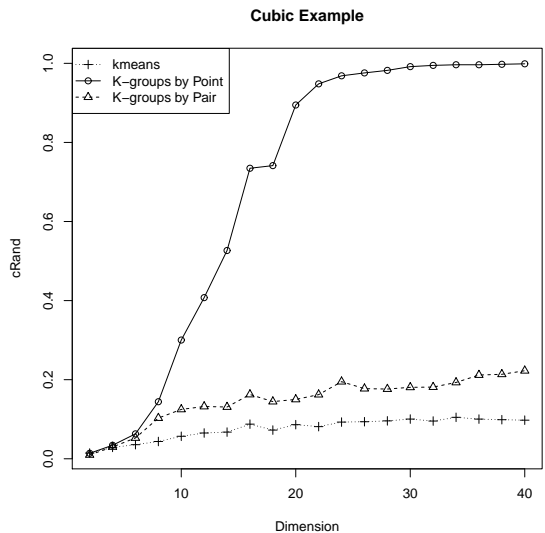}
\end{center}
\caption{Multivariate cubic mixtures, $d=2,4,...,40, \,n=200, \,B=500$}
\label{hduniform}
\end{figure}

\begin{table}[!htbp]
\caption{Cubic Mixture, $\alpha=1$.}
\centering
$0.5\,\mathrm{Cubic}^d\,(0,1)+0.5\,\mathrm{Cubic}^d\,(0.3,0.7)$.
\begin{tabular} {l c c c c c}\\
\hline\hline
  Method          & d & Diag & Kappa & Rand & cRand \\
\hline
  $k$-means         &  1 &0.5381& 0.0758 & 0.5021 & 0.0043\\
  $k$-groups Point  &  1 &0.5352& 0.0710 & 0.5014 & 0.0028\\
  $k$-groups Pair   &  1 &0.5352& 0.0710 & 0.5014 & 0.0028\\

\hline
  $k$-means         &  2 &0.5439& 0.0877 & 0.5032 & 0.0065\\
  $k$-groups Point  &  2 &0.5440& 0.0879 & 0.5034 & 0.0068\\
  $k$-groups Pair   &  2 &0.5542& 0.0884 & 0.5034 & 0.0069\\
\hline
  $k$-means         &  5 &0.5536& 0.1067 & 0.5056 & 0.0113\\
  $k$-groups Point  &  5 &0.5713& 0.1427 & 0.5128 & 0.0257\\
  $k$-groups Pair   &  5 &0.5676& 0.1355 & 0.5120 & 0.0240\\
\hline
  $k$-means         &  10 &0.5705& 0.1393 & 0.5128& 0.0257\\
  $k$-groups Point  &  10 &0.7875& 0.5758 & 0.6923& 0.3847\\
  $k$-groups Pair   &  10 &0.6647& 0.3287 & 0.5672& 0.1346\\
\hline
  $k$-means          &  20 &0.6065 & 0.2078 & 0.5274 &0.0550\\
  $k$-groups Point   &  20 &0.9976 & 0.9951 & 0.9952 &0.9904\\
  $k$-groups Pair    &  20 &0.7213 & 0.4416 & 0.6045 &0.2090\\
\hline
   $k$-means         &  40 &0.6396 & 0.2794 & 0.5406 &0.0810\\
  $k$-groups Point   &  40 &0.9999 & 0.9999 & 0.9999 &0.9997\\
  $k$-groups Pair    &  40 &0.7471 & 0.4960 & 0.6228 &0.2456\\
\hline\hline
\end{tabular}
\label{0.5 mv uniform (0,1) (0.3,0.7)}
\end{table}

\section{Case Study}
\subsection{Diagnosis of Erythemato-Squamous Diseases in Dermatology}
\par The dermatology data analyzed is publicly available from the UCI Machine Learning Repository \citep{blake1998uci} at ftp.ics.uci.edu. The data was analyzed by \cite*{guvenir1998learning}, and contributed by G\"{u}venir. The erythemato-squamous dieases are proriasis, seboreic dermatitis, lichen planus, pityriasis rosea, choronic dermatitis and pityriasis rubra pilaris. According to \cite{guvenir1998learning}, diagnosis is difficult since all these diseases share the similar clinical features of erythema and scaling. Another difficulty is that a disease may show histopathological features of another disease initially, but have characteristic features at the following stages.

The data consists of $366$ objects with 34 attributes. There are $12$ clinical attributes and $22$ histopathological attributes. All except two take values in ${0,1,2,3}$, where $0$ indicates the feature was not present, and $3$ is the maximum. The attribute of family history takes value $0$ or $1$ and age of patient takes positive integer values. There are eight missing values in the age. The clinical and histopathological attributes are summarized in Table \ref{dermatology data summary}. We standardize all the attributes to zero mean and unit standard deviation and delete any observations which have missing values. The effective data size is 358 in the cluster analysis.

\begin{table}[!ht]
\caption{Dermatology Data Summary}
\centering
\begin{tabular} {l l l l } \\
\hline
   & Clinical Attributes  &  & Histopathological Attributes \\[0.5ex]
\hline
\hline
1. & erythema                  & 12. & melanin incontinence \\
2. & scaling                   & 13. & eosinophils in the infiltrate \\
3. & definite borders          & 14. & PNL infiltrate                 \\
4. & itching                   & 15. & fibrosis of the paillary derims  \\
5. & koebner phenomenon        & 16. & exocytosis          \\
6. & polygonal papules         & 17. & acanthosis    \\
7. & follicular papules        & 18. & hyperkeratosis  \\
8. & oral mucosal involvement  & 19. & parakeratosis     \\
9. & knee and elbow involvement& 20. & clubbing of the rete ridges \\
10.& scalp involvement         & 21. & elongation of the rete ridges \\
11.& family history            & 22. & thinning of the suprapapillary epidermis \\
34.& age                       & 23. & pongiform pustule \\
   &                           & 24. & munro microabcess  \\
   &                           & 25. & focal hyperganulosis \\
   &                           & 26. & disapperance of the granular layer \\
   &                           & 27. & vaculolization and damage of basal layer \\
   &                           & 28. & spongiosis \\
   &                           & 29. & saw-tooth appearance of retes \\
   &                           & 30. & follicular horn plug \\
   &                           & 31. & perifollicular parakeratosis \\
   &                           & 32. & inflammatory mononuclear infiltrate \\
   &                           & 33. & band-like infiltrate \\
\hline
\end{tabular}
\label{dermatology data summary}
\end{table}
\par Table \ref{dermatology data results} shows the clustering result of $k$-means, $k$-groups by first variation, $k$-groups by second variation, and Hierarchical $\xi$. Hierarchical $\xi$ is agglomerative hierarchical clustering by energy distance; see  \cite{szekely2005hierarchical} for details. The maximum Rand and cRand index values $0.9740$ and $0.9188$ are obtained by $k$-groups by first variation. The Hierarchical $\xi$ obtains the second largest Rand and cRand index values $0.9730$ and $0.9159$. $k$-groups by second variation obtains the Rand and cRand index values $0.9543$ and $0.8602$. $k$-means obtains smallest Rand and cRand index values among those four algorithms: $0.9441$ and $0.8390$, respectively.

\begin{table}[!ht]
\caption{Dermatology Data Results}
\centering
\begin{tabular} {l c c c c}
\hline\hline
Indices & $k$-means & $k$-groups Point & $k$-groups Pair & Hierarchical $\xi$ \\[0.5ex]
\hline
 Diag  & $0.8324$  & $0.9553$  & $0.8910$ & $0.9497$ \\
 Kappa & $0.7882$  & $0.9440$  & $0.8640$ & $0.9370$ \\
 Rand  & $0.9441$  & $0.9740$  & $0.9543$ & $0.9730$ \\
 cRand & $0.8390$  & $0.9188$  & $0.8602$ & $0.9159$ \\
\hline\hline
\end{tabular}
\label{dermatology data results}
\end{table}

\section{Summary}

The simulation results for univariate and multivariate cases show that both $k$-groups algorithms perform as well as Hartigan and Wong's $k$-means algorithm when clusters are well-separated and normally distributed. Both $k$-groups algorithms perform better than $k$-means when data does not have a finite first moment. For data which has strong skewness and heavy tails, both $k$-groups algorithms perform better than $k$-means. For non--spherical clusters, both $k$-groups algorithms perform better than $k$-means in high dimensions and $k$-groups by first variation is consistent as dimension increases. Results of clustering on the dermatology data show that both $k$-groups algorithms perform better than $k$-means, and $k$-groups by first variation had slightly higher agreement measures than Hierarchical $\xi$ on this data set.

In summary, our proposed $k$-groups method can be recommended for all types of unsupervised clustering problems with pre-specified number of clusters, because performance was typically comparable to or better than $k$-means. $k$-groups has other advantages and it is a more general method. It can be applied to cluster feature vectors in arbitrary dimension and the index $\alpha$ can be chosen to handle very heavy tailed data with non-finite expected distances. We have developed and applied an updating formula analogous to Hartigan and Wong, which has been implemented in R and provided in an R package \emph{kgroups} \cite{kgroups}. Our algorithms could also be implemented in Python, Matlab or other widely used languages.

\bibliographystyle{chicago}
\bibliography{reference}

\begin{thebibliography}{}

\bibitem[\protect\citeauthoryear{Blake and Merz}{Blake and
  Merz}{1998}]{blake1998uci}
Blake, C. and C.~J. Merz (1998).
\newblock {UCI} repository of machine learning databases.

\bibitem[\protect\citeauthoryear{G{\"u}venir, Demir{\"o}z, and
  Ilter}{G{\"u}venir et~al.}{1998}]{guvenir1998learning}
G{\"u}venir, H.~A., G.~Demir{\"o}z, and N.~Ilter (1998).
\newblock Learning differential diagnosis of erythemato-squamous diseases using
  voting feature intervals.
\newblock {\em Artificial Intelligence in Medicine\/}~{\em 13\/}(3), 147--165.

\bibitem[\protect\citeauthoryear{Hartigan and Wong}{Hartigan and
  Wong}{1979}]{hartigan1979algorithm}
Hartigan, J.~A. and M.~A. Wong (1979).
\newblock Algorithm {AS} 136: A k-means clustering algorithm.
\newblock {\em Applied Statistics\/}~{\em 28\/}(1), 100--108.

\bibitem[\protect\citeauthoryear{Li}{Li}{2015}]{ls2015}
Li, S. (2015).
\newblock {\em $k$-groups: A Generalization of $k$-means by Energy Distance}.
\newblock Ph.\ D. thesis, Bowling Green State University, Bowling Green, Ohio.

\bibitem[\protect\citeauthoryear{Li and Rizzo}{Li and Rizzo}{2015}]{kgroups}
Li, S. and M.~L. Rizzo (2015).
\newblock kgroups: Cluster analysis based on energy distance.
\newblock R package version 1.0.

\bibitem[\protect\citeauthoryear{Lloyd}{Lloyd}{1982}]{lloyd1982least}
Lloyd, S. (1982).
\newblock Least squares quantization in pcm.
\newblock {\em Information Theory, IEEE Transactions on\/}~{\em 28\/}(2),
  129--137.

\bibitem[\protect\citeauthoryear{Milligan}{Milligan}{1996}]{milligan1996clustering}
Milligan, G.~W. (1996).
\newblock {\em Clustering Validation: Results and Implications for Applied
  Analyses}.
\newblock World Scientific.

\bibitem[\protect\citeauthoryear{Pearson}{Pearson}{1894}]{pearson1894contributions}
Pearson, K. (1894).
\newblock Contributions to the mathematical theory of evolution.
\newblock {\em Philosophical Transactions of the Royal Society of London.
  A\/}~{\em 185\/}(1), 71--110.

\bibitem[\protect\citeauthoryear{{R Core Team}}{{R Core
  Team}}{2017}]{rcitation}
{R Core Team} (2017).
\newblock {\em R: A Language and Environment for Statistical Computing}.
\newblock Vienna, Austria: R Foundation for Statistical Computing.

\bibitem[\protect\citeauthoryear{Rizzo and Sz{\'e}kely}{Rizzo and
  Sz{\'e}kely}{2010}]{rizzo2010disco}
Rizzo, M.~L. and G.~J. Sz{\'e}kely (2010).
\newblock {DISCO} analysis: A nonparametric extension of analysis of variance.
\newblock {\em The Annals of Applied Statistics\/}~{\em 4\/}(2), 1034--1055.

\bibitem[\protect\citeauthoryear{Sz{\'e}kely}{Sz{\'e}kely}{1989}]{TR89}
Sz{\'e}kely, G.~J. (1989).
\newblock Potential and kinetic energy in statistics.
\newblock Technical report, Budapest Institute of Technology (Budapest
  Technical University).
\newblock Lecture Notes.

\bibitem[\protect\citeauthoryear{Sz{\'e}kely}{Sz{\'e}kely}{2002}]{TR2002}
Sz{\'e}kely, G.~J. (2002).
\newblock {E}-statistics: {E}nergy of statistical samples.
\newblock Technical Report 02--16, Bowling Green State University, Department
  of Mathematics and Statistics.

\bibitem[\protect\citeauthoryear{Sz{\'e}kely and Rizzo}{Sz{\'e}kely and
  Rizzo}{2004}]{szekely2004testing}
Sz{\'e}kely, G.~J. and M.~L. Rizzo (2004).
\newblock Testing for equal distributions in high dimension.
\newblock {\em InterStat\/}~{\em 5}, 1--16.

\bibitem[\protect\citeauthoryear{Sz{\'e}kely and Rizzo}{Sz{\'e}kely and
  Rizzo}{2005a}]{szekely2005hierarchical}
Sz{\'e}kely, G.~J. and M.~L. Rizzo (2005a).
\newblock Hierarchical clustering via joint between-within distances: Extending
  {W}ard's minimum variance method.
\newblock {\em Journal of Classification\/}~{\em 22\/}(2), 151--183.

\bibitem[\protect\citeauthoryear{Sz{\'e}kely and Rizzo}{Sz{\'e}kely and
  Rizzo}{2005b}]{szekely2005new}
Sz{\'e}kely, G.~J. and M.~L. Rizzo (2005b).
\newblock A new test for multivariate normality.
\newblock {\em Journal of Multivariate Analysis\/}~{\em 93\/}(1), 58--80.

\bibitem[\protect\citeauthoryear{Sz{\'e}kely and Rizzo}{Sz{\'e}kely and
  Rizzo}{2013}]{sr2013b}
Sz{\'e}kely, G.~J. and M.~L. Rizzo (2013).
\newblock Energy statistics: A class of statistics based on distances.
\newblock {\em Journal of Statistical Planning and Inference\/}~{\em 143\/}(8),
  1249--1272.

\bibitem[\protect\citeauthoryear{Sz{\'e}kely and Rizzo}{Sz{\'e}kely and
  Rizzo}{2017}]{sr2017}
Sz{\'e}kely, G.~J. and M.~L. Rizzo (2017, April).
\newblock The energy of data.
\newblock {\em The Annual Review of Statistics and Its Application\/}~{\em 4},
  447--479.

\bibitem[\protect\citeauthoryear{Tan, Steinbach, and Kumar}{Tan
  et~al.}{2006}]{tan2006cluster}
Tan, P.-N., M.~Steinbach, and V.~Kumar (2006).
\newblock Cluster analysis: basic concepts and algorithms.

\end{thebibliography}
\end{document}